
\documentclass[12pt]{article}
%%%%%%%%%%%%%%%%%%%%%%%%%%%%%%%%%%%%%%%%%%%%%%%%%
\usepackage{amsfonts}
\usepackage{amssymb}
\usepackage{amsmath}
\usepackage{theorem}

\setcounter{MaxMatrixCols}{10}
%TCIDATA{OutputFilter=Latex.dll}
%TCIDATA{Version=5.50.0.2953}
%TCIDATA{<META NAME="SaveForMode" CONTENT="1">}
%TCIDATA{BibliographyScheme=Manual}
%TCIDATA{LastRevised=Tuesday, June 11, 2019 13:16:56}
%TCIDATA{<META NAME="GraphicsSave" CONTENT="32">}
%TCIDATA{Language=American English}

\newtheorem{theo}{Theorem}[section]
{\theorembodyfont{\rmfamily}
\newtheorem{defin}[theo]{Definition}
}

\newenvironment{proof}{\textit{Proof.}}{\hfill$\square$}

{\theorembodyfont{\rmfamily}

}
{\theorembodyfont{\rmfamily}

}
{\theorembodyfont{\rmfamily}

}

\begin{document}

\title{On egalitarian values for cooperative games with a priori unions}
\author{J.M. Alonso-Meijide$^1$, J. Costa$^2$, \\
%EndAName
I. Garc\'{\i}a-Jurado$^3$, J.C. Gon\c{c}alves-Dosantos$^3$}
\date{\empty}
\maketitle

\begin{abstract}
In this paper we extend the equal division and the equal surplus division 
values for transferable utility cooperative games  to the more general setup of  transferable utility cooperative games with a priori 
unions. In the case of the equal surplus division value we propose 
three possible extensions. 
We provide axiomatic characterizations of the new values.
Furthermore, we apply the proposed modifications to a particular cost
sharing problem and compare the numerical results with those obtained with
the original values.
\end{abstract}

\footnotetext[1]{%
Grupo MODESTYA, Departamento de Estat\'{\i}stica, An\'alise Matem\'atica e
Optimizaci\'on, Universidade de Santiago de Compostela, Facultade de
Ciencias, Campus de Lugo, 27002 Lugo, Spain.} \footnotetext[2]{%
Grupo MODES, Departamento de Matem\'aticas, Universidade da Coru\~{n}a,
Campus de Elvi\~{n}a, 15071 A Coru\~{n}a, Spain.}
\footnotetext[3]{Grupo MODES, CITIC and Departamento de Matem\'aticas, Universidade da Coru\~{n}a, Campus de Elvi\~{n}a, 15071 A Coru\~{n}a, Spain.}

\noindent \textbf{Keywords:} cooperative games, a priori unions, equal
division value, equal surplus division value.

\section{Introduction}

Many economic problems deal with situations in which several agents
cooperate to generate benefits or to reduce costs. Cooperative game 
theory studies procedures to allocate the resulting benefits (or costs) 
among the cooperating agents in those situations.

One of the most commonly used allocating procedures is the Shapley value,
introduced in Shapley (1953) and analyzed more recently in Moretti and
Patrone (2008) or in Alonso-Meijide et al. (2019). Very often, however,
agents cooperate on the basis of a kind of egalitarian principle according
to which the benefits will be shared equitably. For instance, Selten (1972)
indicates that egalitarian considerations explain in a successful way
observed outcomes in experimental cooperative games.

In recent years, the game theoretical literature has dealt with several
egalitarian solutions in cooperative games. For instance, van den Brink
(2007) provides a comparison of the equal division value and the Shapley
value, and Casajus and H\"{u}ttner (2014) compare those two solutions with
the equal surplus division value (studied first in Driessen and Funaki, 1991). 
In van den Brink and Funaki (2009), Chun
and Park (2012), van den Brink et al. (2016), Ferri\`{e}res (2017) and B\'{e}%
al et al. (2019) several axiomatic characterizations of the equal division
and equal surplus division values are provided. Ju et al. (2007) introduce
and characterize the consensus value, a new solution that somewhat combines
the Shapley value and the equal division rule. Dutta and Ray (1989)
introduce the egalitarian solution for cooperative games, closely related to Lorenz dominance, that considers
cooperating agents who believe in equality as a desirable social goal and
negotiate accordingly; this solution was later characterized by Dutta
(1990), Klijn et al. (2000) and Ar\'{\i}n et al. (2003), and modified by
Dietzenbacher et al. (2017).

Another stream of literature in cooperative game theory started in Owen
(1977), where a variant of the Shapley value for games with a priori unions
is introduced and characterized. In a game with a priori unions there exists
a partition of the set of players, whose classes are called unions, that is
interpreted as an a priori coalition structure that conditions the
negotiation among the players and, consequently, modifies the fair outcome
of the negotiation. There is a large literature concerning the Owen value
and its applications; just to cite some recent papers, Lorenzo-Freire (2016)
provides new axiomatic characterizations of the Owen value, Costa (2016)
deals with an application in a cost allocation problem, and Saavedra-Nieves
et al. (2018) propose a sampling procedure to approximate it. Not only the
Shapley value but also other values have been modified for the case with a
priori unions. For instance, Alonso-Meijide and Fiestras-Janeiro (2002) deal
with the Banzhaf value for games with a priori unions, Casas-M\'{e}ndez et
al. (2003) introduce the $\tau $-value for games with a priori unions,
Alonso-Meijide et al. (2011) study the Deegan-Packel index for simple games
with a priori unions, and Hu et al. (2019) introduce an egalitarian
efficient extension of the Aumann-Dr\`{e}ze value (Aumann and Dr\`{e}ze,
1974).

In this paper we modify the equal division value and the equal surplus
division value for games with a priori unions. In Section 2 we illustrate
the interest of our study describing a cost allocation problem that arises
in the installation of an elevator in an apartment building. In Section 3 we
define and characterize the equal division rule for games with a priori
unions. In Section 4 we introduce and characterize three alternative
extensions of the equal surplus division rule for games with a priori
unions. In Section 5 we include some final remarks.

\section{An example}

In this section we consider an example where the owners of apartments in a
building have agreed to install an elevator and share the corresponding
costs. This example is inspired by a problem analyzed in Crettez and Deloche
(2018) from the point of view of French legislation. The French Law on
Apartment Ownership of Buildings does not provide a precise method for
sharing the cost of an improvement but indicates that the co-owners must
pay \textquotedblleft in proportion to the advantages\textquotedblright\
they will receive. In the case of elevators in France, Crettez and Deloche
(2018) indicate that there is a de facto sharing method that they call the 
\emph{elevator rule}. In their paper they study the elevator rule and other
proposals in the spirit of the French legislation.

However, Crettez and Deloche (2018) explain that in other European countries
the legislation is based on principles of egalitarian character. For
example, in The Netherlands each of the owners of the apartments must
\textquotedblleft participate for an equal part in the debts and costs which
are for account of all apartments owned pursuant to law or the internal
arrangements, unless the internal arrangements provide for another
proportion of participation.\textquotedblright

The Spanish Horizontal Property Law 49/1960 (modified by the Act 8/2013)
indicates that \textquotedblleft to each apartment or local will be attributed a
quota of participation in relation to the total of the value of the building
(\dots ). This quota will serve as a module to determine the participation
in the burdens and benefits due to the community.\textquotedblright\ These
quotas generally depend on the surface area of each apartment but can take
into account other aspects.

In a particular example, let us see how the Dutch and Spanish rules would
share the costs of installing an elevator. Consider the following
three-storey building with no apartments or offices on the ground floor: on the first floor there is a single apartment of 180
square meters, on the second floor there are two apartments, one of 100 and
other one of 90 square meters, and on the third floor there are three
apartments of 60 square meters each. The second floor has a slightly larger
area because one of the two apartments on the floor has an additional
gallery. Suppose now that the cost of installing the elevator is 120 (in
thousands of euros), 50 of which correspond to the machine, 40 to the works
to make the hollow of the elevator, and 30 to the works to be done on each
floor to allow access to the elevator (10 in each of them). Table \ref%
{table1} below shows the distribution of costs for each of the apartments
according to the Dutch and Spanish rules (the latter with quotas for each
apartment given by its surface). Notice that both rules are based on
egalitarian principles and can be interpreted as the equal division rule;
the difference is that in the case of the Dutch rule the subjects that
receive the equitable distribution are the apartments, whereas in the case
of the Spanish rule the equitable distribution subjects are the quota units.%
\footnote{%
In this example the quota units are the square meters of the apartments. For
the approach we adopt to be meaningful, the quota unit numbers must be
integers.} Notice that the same egalitarian spirit of these rules can be
maintained despite changing the equitable distribution subjects. For
instance, it would be natural to consider a kind of two-step equitable
distribution subjects, where the subjects in the first step are the floors
and the subjects in the second step are the apartments (in the case of the
Dutch rule) or the quota units (in the case of the Spanish rule). This would
result in the distribution of costs shown in Table \ref{table2} below.
Observe that this variation arises from considering that the floors of the
building naturally give rise to a structure of a priori unions in the sense
of Owen (1977) and, thus, the convenience of extending the equal division
value for games with a priori unions emerges spontaneously in this example.
We do it formally in Section 3.

There are other possible variations of these Dutch and Spanish rules with
and without the structure of a priori unions when using the equal surplus
division value instead of the equal division value; thus, the convenience of
extending the equal surplus division value for games with a priori unions
can also be motivated on the basis of this example. We do it in Section 4,
where we also analyse in more depth how the equal surplus division value for
games with a priori unions can be applied in the example we have discussed
in this section.

\vspace*{0.5cm}

\begin{table}[htbp]
\begin{center}
\begin{tabular}{|l|l|l|}
\hline
& Dutch rule & Spanish rule \\ \hline
3rd floor & 20 20 20 & 13.0909 13.0909 13.0909 \\ \hline
2nd floor & 20 20 & 21.8182 19.6364 \\ \hline
1st floor & 20 & 39.2727 \\ \hline
\end{tabular}%
\end{center}
\caption{Distribution according to the Dutch and Spanish rules}
\label{table1}
\end{table}

\begin{table}[htbp]
\begin{center}
\begin{tabular}{|l|l|l|}
\hline
& Dutch rule & Spanish rule \\ \hline
3rd floor & 13.3333 13.3333 13.3333 & 13.3333 13.3333 13.3333 \\ \hline
2nd floor & 20 20 & 21.0526 18.9474 \\ \hline
1st floor & 40 & 40 \\ \hline
\end{tabular}%
\end{center}
\caption{Distribution according to the two-step Dutch and Spanish rules}
\label{table2}
\end{table}

\section{The equal division value for TU-games with a priori unions}

In this section we extend the equal division value for TU-games to the more
general setup of TU-games with a priori unions. To start with, we introduce
the basic concepts and notations we use in this paper.

A transferable utility cooperative game (from now on a {TU-game}) is a pair $%
(N,v)$ where $N$ is a finite set of $n$ players, and $v$ is a map from $2^N$
to $\mathbb{R}$ with $v(\emptyset)=0$, that is called the characteristic
function of the game. In the sequel, $\mathcal{G}_N$ will denote the family
of all TU-games with player set $N$ and $\mathcal{G}$ the family of all
TU-games. A {value} for TU-games is a map $f$ that assigns to every TU-game $%
(N,v)\in\mathcal{G}$ a vector $f(N,v)=(f_i(N,v))_{i\in N}\in\mathbb{R}^N$
with $\sum_{i\in N}f_i(N,v)=v(N)$.

As it was remarked in the introduction, sometimes agents cooperate on the
basis of a kind of egalitarian principle according to which the benefits
will be shared equitably. This gives rise to the {\ equal division value} $%
ED $ that distributes $v(N)$ equally among the players in $N$. Formally, the
equal division value $ED$ is defined for every $(N,v)\in \mathcal{G}$ and
for all $i\in N$\ by 
\begin{equation*}
ED_{i}(N,v)=\frac{v(N)}{n}.
\end{equation*}

Now denote by $P(N)$ the set of all partitions of $N$. A {TU-game with a
priori unions} is a triplet $(N,v,P)$ where $(N,v)\in\mathcal{G}$ and $%
P=\{P_1,\dots,P_m\}\in P(N)$. The set of TU-games with a priori unions and
with player set $N$ will be denoted by $\mathcal{G}_N^{U}$, and the set of
all TU-games with a priori unions by $\mathcal{G}^{U}$. A {value for
TU-games with a priori unions} is a map $g$ that assigns to every $(N,v,P)\in%
\mathcal{G}^U$ a vector $g(N,v,P)=(g_i(N,v,P))_{i\in N}\in\mathbb{R}^N$ with 
$\sum_{i\in N}g_i(N,v,P)=v(N)$. The next definition provides the natural
extension of the equal division value to TU-games with a priori unions.

\begin{defin}
The equal division value for TU-games with a priori unions $ED^{U}$ is
defined by 
\begin{equation*}
ED_{i}^{U}\left( N,v,P\right) =\frac{v(N)}{mp_{k}}
\end{equation*}%
for all $i\in N$ and all $(N,v,P)\in \mathcal{G}^{U}$ with $P=\{P_{1},\dots
,P_{m}\}$ and $i\in P_{k}$; $p_{k}$ denotes the cardinal of $P_{k}$.
\end{defin}

Notice that the equal division value for TU-games with a priori unions has
been used in the motivating example in Section 2 (see Table \ref{table2} and
the corresponding comments). Next we provide an axiomatic characterization
of this value. We start giving some properties of a value $g$ for TU-games
with a priori unions.

\bigskip \noindent \textbf{Additivity (ADD).} A value $g$ for TU-games with
a priori unions satisfies additivity if, for all $(N,v,P), (N,w,P)\in%
\mathcal{G}^U$, it holds that 
\begin{equation*}
g(N,v+w,P)=g(N,v,P)+g(N,w,P).
\end{equation*}

Take a TU-game $(N,v)\in \mathcal{G}^{N}$ and $i,j\in N$. We say that $i,j$
are indistinguishable in $v$ if $v(S\cup i)=v(S\cup j)$ for all $S\subseteq
N\setminus \{i,j\}$.

\bigskip \noindent \textbf{Symmetry within unions (SWU).} A value $g$ for
TU-games with a priori unions satisfies symmetry within unions if, for all $%
(N,v,P)\in\mathcal{G}^U$, all $P_k\in P$, and all $i,j\in P_k$
indistinguishable in $v$, it holds that $g_i(N,v,P)=g_j(N,v,P)$.

\bigskip Take $(N,v,P)\in\mathcal{G}^{U}$ with $P=\{P_1,\dots,P_m\}$ and
denote $M=\{1,\dots,m\}$. The {quotient game} of $(N,v,P)$ is the TU-game $%
(M,v/P)$ where 
\begin{equation*}
(v/P)(R)=v(\cup_{r\in R}P_r)\quad\mbox{for all}\;R\subseteq M.
\end{equation*}

\noindent \textbf{Symmetry among unions (SAU).} A value $g$ for TU-games
with a priori unions satisfies symmetry among unions if, for all $(N,v,P)\in%
\mathcal{G}^U$ and all $k,l\in M$ indistinguishable in $v/P$, it holds that $%
\sum_{i\in P_k}g_i(N,v,P)=\sum_{i\in P_l}g_i(N,v,P)$.

\bigskip Take a TU-game $(N,v)\in \mathcal{G}^N$ and $i\in N$. We say that $%
i $ is a nullifying player in $v$ if $v(S\cup i)=0$ for all $S\subseteq N$.

\bigskip \noindent \textbf{Nullifying player property (NPP).} A value $g$
for TU-games with a priori unions satisfies the nullifying player property
if, for all $(N,v,P)\in \mathcal{G}^{U}$ and all $i\in N$ nullifying player
in $v,$ it holds that $g_{i}(N,v,P)=0$.

\bigskip An analogous to NPP above is used in van den Brink (2007) to
characterize the equal division value for TU-games. In the next theorem, we
extend van den Brink's result to TU-games with a priori unions.

\begin{theo}
$ED^U$ is the unique value for TU-games with a priori unions that satisfies
ADD, SWU, SAU and NPP. \label{th2}
\end{theo}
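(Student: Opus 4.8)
The plan is to follow the classical two-step route for axiomatic characterizations: first verify that $ED^U$ satisfies the four axioms, then prove that these axioms pin down a unique value, which must therefore be $ED^U$. The first step is routine. Additivity is immediate from the linearity of $v\mapsto v(N)$; both symmetry axioms hold because $ED^U$ already assigns the common amount $v(N)/(mp_k)$ to every member of a union $P_k$, so that $\sum_{i\in P_k}ED_i^U=v(N)/m$ for every $k$ irrespective of indistinguishability; and NPP follows because a nullifying player $i$ forces $v(N)=v((N\setminus i)\cup i)=0$, whence $ED_i^U=0$.

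For uniqueness I would work in the standard (Dirac) basis of $\mathcal{G}_N$ consisting of the games $e_T$, $\emptyset\neq T\subseteq N$, with $e_T(S)=1$ if $S=T$ and $0$ otherwise, so that every game decomposes as $v=\sum_{\emptyset\neq T\subseteq N}v(T)\,e_T$. Rather than extracting scalar homogeneity from ADD (which only yields additivity over finite sums), I would fix a scalar $c$ and a nonempty $T$ and determine $g(N,ce_T,P)$ directly from the axioms; additivity then gives $g(N,v,P)=\sum_T g(N,v(T)e_T,P)=\sum_T ED^U(N,v(T)e_T,P)=ED^U(N,v,P)$, the last equality using the additivity of $ED^U$ established in the first step. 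The task therefore reduces to matching $g$ and $ED^U$ on each single game $ce_T$.

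Two observations drive this reduction. First, in $ce_T$ every player outside $T$ is a nullifying player, so NPP kills their payoffs, while any two players of $T$ are indistinguishable in $ce_T$ (for $i,j\in T$ and $S\subseteq N\setminus\{i,j\}$ the coalition $S\cup i$ can never equal $T$ because it omits $j\in T$), so SWU forces a common value on $P_k\cap T$ for each $k$. Second, the quotient game $(ce_T)/P$ equals $c\,e_{R_0}$ on $M$ when $T=\cup_{r\in R_0}P_r$ is a union of whole blocks, and is identically zero otherwise. When $T=N$ one gets $(ce_N)/P=c\,e_M$, all unions are mutually indistinguishable in the quotient, SAU makes the block totals $\sum_{i\in P_k}g_i$ all equal, and efficiency $\sum_{i\in N}g_i=c$ then forces each block total to be $c/m$ and, via SWU, each payoff to be $c/(mp_k)=ED_i^U(N,ce_N,P)$. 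When $T\neq N$ the target value is $0$ everywhere, and I would combine SAU with efficiency $\sum_{i\in N}g_i=0$ to conclude that every relevant block total, hence every payoff, vanishes.

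The only delicate point is the case $T\neq N$, where SAU must be applied carefully: one has to distinguish whether $T$ is a union of complete blocks (so the quotient is $c\,e_{R_0}$ and SAU equates totals only among the blocks in $R_0$ and, separately, among those in its complement) or not (so the quotient is the null game and SAU equates all block totals). In either situation the blocks meeting $T$ carry equal totals that sum to $(ce_T)(N)=0$, which forces each of them to be zero, while the blocks disjoint from $T$ consist entirely of nullifying players and contribute nothing. Assembling the two cases yields $g=ED^U$ on every basis game, and the additivity argument above then extends the equality to all of $\mathcal{G}^U$.
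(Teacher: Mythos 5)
Your proof is correct and takes essentially the same route as the paper's: both verify the axioms and then establish uniqueness by decomposing $v$ into scaled Dirac games $c\,e_T$ (the paper's $e_T^{\alpha}$, introduced precisely because ADD alone gives no scalar homogeneity), pinning $g$ down on each such game via NPP, SWU, SAU and efficiency, and concluding by additivity. Your explicit case analysis for $T\neq N$ (according to whether $T$ is a union of complete blocks, so the quotient is a scaled Dirac game, or not, so the quotient is null) merely fills in the step the paper leaves as ``it is not difficult to check.''
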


\begin{proof} It is immediate to check that $ED^{U}$ satisfies ADD, SWU, SAU and NPP. To
prove the unicity, consider a value $g$ for TU-games with a priori unions
that satisfies ADD, SWU, SAU and NPP. Fix $N$ and define for all $\alpha \in 
\mathbb{R}$ and all non-empty $T\subseteq N$ the TU-game $(N,e_{T}^{\alpha
}) $ given by $e_{T}^{\alpha }(S)=\alpha $ if $S=T$ and $e_{T}^{\alpha
}(S)=0 $ if $S\neq T$. If $T=N$, since $g$ satisfies SWU and SAU, it is
clear that $g_{i}(N,e_{N}^{\alpha },P)=\frac{\alpha }{mp_{k}}$ for any $%
P=\{P_{1},\dots ,P_{m}\}$ and all $i\in P_{k}\subseteq N$, because all
players in $N$ are indistinguishable in $e_{N}^{\alpha }$ and all players in 
$M$ are indistinguishable in $e_{N}^{\alpha }/P$. If $T\subset N$ notice
that all players in $N\setminus T$ are nullifying players in $e_{T}^{\alpha
} $ and then, since $g$ satisfies NPP, 
\begin{equation*}
\sum_{i\in T}g_{i}(N,e_{T}^{\alpha },P)=\sum_{i\in N}g_{i}(N,e_{T}^{\alpha
},P)=e_{T}^{\alpha }(N)=0
\end{equation*}%
for any $P$. Then, since $g$ satisfies SWU and SAU it is not difficult to check that 
$g(N,e_{T}^{\alpha },P)=0$. Finally, the additivity of $g$ and the fact that $%
v=\sum_{T\subseteq N}e_{T}^{v(T)}$ imply that 
\begin{equation*}
g_{i}(N,v,P)=\sum_{T\subseteq
N}g_{i}(N,e_{T}^{v(T)},P)=g_{i}(N,e_{N}^{v(N)},P)=\frac{v(N)}{mp_{k}}%
=ED_{i}^{U}(N,v,P)
\end{equation*}%
for any $P$ and all $i\in P_{k}\subseteq N$.
\end{proof}

\section{The equal surplus division value for TU-games with a priori unions}
\label{sec_esdv}

In this section we extend the equal surplus division value for TU-games to
the more general setup of TU-games with a priori unions. To start with,
remember that the equal surplus division value $ESD$ is defined for every $%
(N,v)\in \mathcal{G}$ and for all $i\in N$\ by 
\begin{equation*}
ESD_{i}(N,v)=v(i)+\frac{v^{0}(N)}{n},
\end{equation*}%
where $v^{0}(S)=v(S)-\sum_{i\in S}v(i)$ for all $S\subseteq N$. Notice
that $ESD$ is a variant of $ED$ in which we first allocate $v(i)$ to each
player $i\in N$, and then distribute $v^{0}(N)$ among the players using $ED$%
. $ESD$ is a reasonable alternative to $ED$ for situations where individual
benefits and joint benefits are neatly separable. Let us illustrate this
with the example of Section 2 (notice that it deals with costs instead of
with benefits).

Consider again the three-storey building of Section 2 and the cost of
installing the elevator. Clearly, the cost of the machine is a joint cost,
whereas the cost due to the works to be done on each floor should be paid by
the owners of each floor. With respect to the costs of the hollow, assume
that there is a fixed cost of 10 and an individual cost of 10 for the owners
of the first floor that is incremented by 10 for the owners of the second
floor and by an additional 10 for the owners of the third floor. According
to this, the cost $c(i)$ in which each player is involved is:

\begin{itemize}
\item 50 (machine) + 10 (floor) + 40 (hollow) = 100, for the players of the
third floor,

\item 50 (machine) + 10 (floor) + 30 (hollow) = 90, for the players of the
second floor,

\item 50 (machine) + 10 (floor) + 20 (hollow) = 80, for the players of the
first floor.
\end{itemize}

Now we can compute the equal surplus division value for the game in which the
players are the apartments and $c(N)=120$ (this is what we call the ES-Dutch
rule) and the equal surplus division value for the game in which the players
are the quota units and $c(N)=120$ (this is what we call the ES-Spanish
rule). Table \ref{table3} below displays the distributions of the cost among
the apartments using both rules. Notice that these distributions are not
satisfactory because they seem to penalize too much the apartments on the
third floor, specially the ES-Spanish rule that even proposes that the
apartment on the first floor is recompensed if the elevator is installed.
The reason for this seems to be that the individual costs in this example
actually belong to the floors instead of to the players; consequently it
would be more reasonable to use a kind of two-step rule for the equal
surplus division value analogous to the two-step rule for the equal division
value introduced in Section 2. In other words, this example suggests that we
should consider the structure of a priori unions given by the floors and
distribute the costs using an extension of the equal surplus division value
to TU-games with a priori unions.

\vspace*{0.5cm}

\begin{table}[htbp]
\begin{center}
\begin{tabular}{|l|l|l|}
\hline
& Dutch rule & Spanish rule \\ \hline
3rd floor & 26.6666 26.6666 26.6666 & 613.0860 613.0860 613.0860 \\ \hline
2nd floor & 16.6666 16.6666 & 21.8100 19.6290 \\ \hline
1st floor & 6.6666 & -1760.7240 \\ \hline
\end{tabular}%
\end{center}
\caption{Distribution according to the ES-Dutch and ES-Spanish rules}
\label{table3}
\end{table}

%\vspace*{0.5cm}

Next we propose three alternative ways for extending the equal surplus
division value to TU-games with a priori unions. The first one divides the
value of the grand coalition in the quotient game using the equal surplus
division value and then divides the amount assigned to each union equally
among its members.

\begin{defin}
The equal surplus division value (one) for TU-games with a priori unions $%
ESD1^U$ is defined by 
\begin{equation*}
ESD1^U_{i}\left(N,v,P\right)=\frac{(v/P)(k)}{p_k}+\frac{(v/P)^0(M)}{mp_k}=%
\frac{v(P_k)}{p_k}+\frac{v(N)-\sum_{l\in M}v(P_l)}{mp_k}
\end{equation*}
for all $i\in N$ and all $(N,v,P)\in\mathcal{G}^U$ with $P=\{P_1,\dots,P_m\}$
and with $i\in P_k$.
\end{defin}

The second extension divides again the value of the grand coalition in the
quotient game using the equal surplus division value; then it distributes
the amount $\frac{v(N)-\sum_{l\in M}v(P_{l})}{m}$ equally among the players 
in each union, and the amount $v(P_k)$ giving $v(i)$ to each
player $i\in P_k$ and dividing $v(P_k)-\sum_{j\in P_k}v(j)$ equally among
the players in $P_k$.

\begin{defin}
The equal surplus division value (two) for TU-games with a priori unions $%
ESD2^{U}$ is defined by 
\begin{equation*}
ESD2_{i}^{U}\left( N,v,P\right) =v(i)+\frac{v(P_{k})-\sum_{j\in P_{k}}v(j)}{%
p_{k}}+\frac{v(N)-\sum_{l\in M}v(P_{l})}{mp_{k}}
\end{equation*}%
for all $i\in N$ and all $(N,v,P)\in \mathcal{G}^{U}$ with $P=\{P_{1},\dots
,P_{m}\}$ and with $i\in P_{k}$.
\end{defin}

Finally, the third extension assigns $v(i)$ to each player $i$ and then
divides $v^0(N)$ among the players using $ED^U$.

\begin{defin}
The equal surplus division value (three) for TU-games with a priori unions $%
ESD3^U$ is defined by 
\begin{equation*}
ESD3^U_{i}\left(N,v,P\right)=v(i)+ED^U(N,v^0,P)=v(i)+ \frac{v(N)-\sum_{j\in
N}v(j)}{mp_k}
\end{equation*}
for all $i\in N$ and all $(N,v,P)\in\mathcal{G}^U$ with $P=\{P_1,\dots,P_m\}$
and with $i\in P_k$.
\end{defin}

Now we can compute the equal surplus division values one, two and three for
the game with a priori unions in which the players are the apartments, the
unions are the floors and $c(N)=120$ (they are what we call the $ESD1^U$, $%
ESD2^U$ and $ESD3^U$-Dutch rules) and the equal surplus division values one,
two and three for the game with a priori unions in which the players are the
quota units, the unions are the floors and $c(N)=120$ (they are what we call
the $ESD1^U$, $ESD2^U$ and $ESD3^U$-Spanish rules). Tables \ref{table4}, \ref%
{table5} and \ref{table6} below display the distributions of the cost among
the apartments using these rules.\footnote{Notice that Tables 4 and 5 are identical. This is because $ESD1$ and $ESD2$ coincide when, as in this example, for each union $P_k$ and each $i,j\in P_k$, it is satisfied that $v(i)=v(j)$.} The results in Tables \ref{table4} and \ref%
{table5} seem to be more reasonable than those in Table \ref{table3}; notice
that they slightly penalize the higher floors in comparison with the results
in Table \ref{table2}. The result in Table \ref{table6} is not satisfactory
since it penalizes too much the apartments on the third floor. It shows that 
$ESD3^U$ is not an appropriate extension of $ESD$, at least for this
example; we informally discuss why in the section of concluding remarks.

\vspace*{0.5cm}

\begin{table}[htbp]
\begin{center}
\begin{tabular}{|l|l|l|}
\hline
& Dutch rule & Spanish rule \\ \hline
3rd floor & 16.6666 16.6666 16.6666 & 16.6666 16.6666 16.6666 \\ \hline
2nd floor & 20 20 & 21.0584 18.9525 \\ \hline
1st floor & 30 & 30 \\ \hline
\end{tabular}%
\end{center}
\caption{Distribution according to $ESD1^U$}
\label{table4}
\end{table}
\begin{table}[htbp]
\begin{center}
\begin{tabular}{|l|l|l|}
\hline
& Dutch rule & Spanish rule \\ \hline
3rd floor & 16.6666 16.6666 16.6666 & 16.6666 16.6666 16.6666 \\ \hline
2nd floor & 20 20 & 21.0584 18.9525 \\ \hline
1st floor & 30 & 30 \\ \hline
\end{tabular}%
\end{center}
\caption{Distribution according to $ESD2^U$}
\label{table5}
\end{table}
\begin{table}[htbp]
\begin{center}
\begin{tabular}{|l|l|l|}
\hline
& Dutch rule & Spanish rule \\ \hline
3rd floor & 51.1111 51.1111 51.1111 & 513.3333 513.3333 513.3333 \\ \hline
2nd floor & 16.6666 16.6666 & 336.8421 303.1579 \\ \hline
1st floor & -66.6666 & -2060 \\ \hline
\end{tabular}%
\end{center}
\caption{Distribution according to $ESD3^U$}
\label{table6}
\end{table}

In the remainder of this section we study $ESD1^{U}$, $ESD2^{U}$ and $%
ESD3^{U}$ from the point of view of their properties; in particular, we
provide axiomatic characterizations of these values. We start by introducing
new properties of a value $g$ for TU-games with a priori unions. Take $%
(N,v)\in \mathcal{G}$ and $i\in N$. We say that $i$ is a dummifying player
in $v$ if $v(S\cup i)=\sum_{j\in S\cup i}v(j)$ for all $S\subseteq N$. Take
now a TU-game with a priori unions $(N,v,P)\in \mathcal{G}^{U}$ where $%
P=\{P_{1},\dots ,P_{m}\}$. We say that $P_{k}$ is a dummifying union in $%
(v,P)$ if $k$ is a dummifying player in $v/P$. Dummifying players and
dummifying unions should play a relevant role in the characterizations of $%
ESD1^{U}$, $ESD2^{U}$ and $ESD3^{U}$ since a property on dummifying players
is used in Casajus and H\"{u}ttner (2014) for characterizing $ESD$. In fact
they use the following property (for $\mathcal{G}$ instead of $\mathcal{G}%
^{U}$).

\bigskip \noindent \textbf{Dummifying player property (DPP).} A value $g$
for TU-games with a priori unions satisfies the dummifying player property
if, for all $(N,v,P)\in\mathcal{G}^U$ and all $i\in N$ dummifying player in $%
v$, it holds that $g_{i}( N,v,P) =v(i)$.

\bigskip Notice that $ESD3^U$ satisfies DPP, but neither $ESD1^U$ nor $%
ESD2^U $ satisfy it. In the search of properties that $ESD1^U$ or $ESD2^U$
might satisfy, we propose the following variations of DPP and NPP.

\bigskip \noindent \textbf{Dummifying union/player property (DUPP).} A value 
$g$ for TU-games with a priori unions satisfies the dummifying union/player
property if, for all $(N,v,P)\in \mathcal{G}^{U}$ and all $P_{k}\in P$
dummifying union in $(v,P)$ with $i\in P_{k}$ being a dummifying player in $%
v_{P_{k}},$\footnote{$v_{P_{k}}$ denotes the characteristic function of the
TU-game $(P_{k},v_{P_{k}})$, where $v_{P_{k}}(S)=v(S)$ for all $S\subseteq
P_{k}$.} it holds that $g_{i}(N,v,P)=v(i)$.

\bigskip \noindent \textbf{Dummifying union/nullifying player property
(DUNPP).} A value $g$ for TU-games with a priori unions satisfies the
dummifying union/nullifying player property if, for all $(N,v,P)\in \mathcal{%
G}^{U}$ and all $P_{k}\in P$ dummifying union in $(v,P)$ with $i\in P_{k}$
being a nullifying player in $v_{P_{k}},$ it holds that $g_{i}(N,v,P)=0$.

Now we give parallel characterizations of the three extensions of $ESD$
using the properties we have introduced above.

\begin{theo}
$ESD1^U$ is the unique value for TU-games with a priori unions that
satisfies ADD, SWU, SAU and DUNPP. \label{th4}
\end{theo}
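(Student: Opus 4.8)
The plan mirrors the proof of Theorem~\ref{th2}, adapting it to the surplus setting. The existence half is a routine verification: I would check directly from the formula for $ESD1^U$ that it satisfies ADD (both $v(P_k)/p_k$ and the quotient-surplus term are linear in $v$), SWU and SAU (indistinguishable players within a union receive the same $v(P_k)/p_k$ and the same quotient term, and the total assigned to a union is the quotient-game $ESD$-payoff, which respects symmetry among unions), and DUNPP. For DUNPP, if $P_k$ is a dummifying union then $(v/P)(k)=\sum_{j\in P_k}\cdots$ forces the quotient-surplus term $(v/P)^0(M)$ to contribute nothing extra to $P_k$ beyond what equal division would give, and if in addition $i\in P_k$ is a nullifying player in $v_{P_k}$ then $v(P_k)=0$, so $ESD1^U_i(N,v,P)=0$. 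I expect these checks to be mechanical.

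The substance is uniqueness. Let $g$ be any value satisfying ADD, SWU, SAU and DUNPP. As before, fix $N$ and use the additive basis $\{e_T^{\alpha}\}$ together with $v=\sum_{T\subseteq N}e_T^{v(T)}$, so that by ADD it suffices to pin down $g(N,e_T^{\alpha},P)$ for every non-empty $T\subseteq N$ and every partition $P$. The games $e_T^\alpha$ with $T=N$ are handled exactly as in Theorem~\ref{th2} via SWU and SAU, yielding the ``grand coalition'' contribution. The new work is to show that $g(N,e_T^\alpha,P)$ matches $ESD1^U(N,e_T^\alpha,P)$ for proper $T\subset N$; here the nullifying-player argument of Theorem~\ref{th2} is no longer available unchanged, because DUNPP only forces a payoff to be zero when the relevant union is dummifying.

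The key structural observation I would exploit is the case split on how $T$ sits relative to the partition $P=\{P_1,\dots,P_m\}$. For a union $P_l$, one checks whether $e_T^\alpha$ makes $P_l$ a dummifying union and, inside a dummifying union, which players are nullifying in the restricted game $v_{P_k}$. Concretely, I would argue that for $T\subset N$ every union $P_l$ with $P_l\neq T$ fails to interact with $e_T^\alpha$ in the quotient game except possibly at $R=M$, and whenever a union is dummifying its players that are nullifying inside the union get $0$ by DUNPP; combining this with SWU/SAU to equalize the remaining payoffs and with the efficiency constraint $\sum_{i\in N}g_i=e_T^\alpha(N)=0$ should force $g(N,e_T^\alpha,P)=0=ESD1^U(N,e_T^\alpha,P)$ for all proper $T$. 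Finally, reassembling by additivity gives $g_i(N,v,P)=g_i(N,e_N^{v(N)},P)+\sum_{T\subset N}g_i(N,e_T^{v(T)},P)$, and I would verify that this sum reproduces the closed form $\frac{v(P_k)}{p_k}+\frac{v(N)-\sum_{l}v(P_l)}{mp_k}$.

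The main obstacle is precisely the bookkeeping in the last paragraph: unlike the $ED^U$ proof, where all proper $T$ contribute zero and only $e_N^{v(N)}$ survives, here the correct additive building blocks that survive are those yielding the $v(P_k)/p_k$ terms, so I expect that not all $e_T^\alpha$ with $T\subset N$ vanish under $g$. The delicate point is identifying which coalitions $T$ (presumably the unions $P_l$ themselves, or coalitions of the form $\cup_{r\in R}P_r$) give nonzero contributions, showing DUNPP plus SWU/SAU determines those contributions uniquely, and confirming that the surviving contributions sum to the stated formula. Getting this case analysis exhaustive and consistent with efficiency is where I would concentrate the effort.
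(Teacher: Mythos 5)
Your existence check is fine, and your instinct in the last paragraph about where the difficulty lies is exactly right; but the uniqueness argument as sketched cannot be completed, and in fact your third and fourth paragraphs contradict each other. The failure point is the basis game $e_{T}^{\alpha}$ with $T=P_l$ a single union and $\alpha=v(P_l)\neq 0$ (with $m\geq 2$). Its quotient game is $e_{\{l\}}^{\alpha}$ on $M$, which is not additive, so \emph{no} union of $P$ is a dummifying union in $(e_{P_l}^{\alpha},P)$ and DUNPP is vacuous there; moreover union $l$ is not indistinguishable from the other unions in the quotient game, so SAU only equalizes the totals of the unions different from $l$. The only constraints the axioms place on this single game are: equal split inside each union (SWU), a common total $y$ for each union other than $l$ (SAU), and $x+(m-1)y=0$ where $x$ is the total of union $l$ (efficiency). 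This leaves a one-parameter indeterminacy, and the true value is not zero: $ESD1^U_i(N,e_{P_l}^{\alpha},P)=\frac{\alpha}{p_l}-\frac{\alpha}{mp_l}$ for $i\in P_l$ and $-\frac{\alpha}{mp_k}$ for $i\in P_k$, $k\neq l$. So your claim that efficiency plus DUNPP/SWU/SAU forces $g(N,e_T^{\alpha},P)=0$ for all proper $T$ is false, and—as you yourself then suspect—these are precisely the surviving contributions; but the axioms applied game-by-game cannot determine them, so the plan ``pin down $g$ on each basis game, then sum by ADD'' breaks down irreparably at exactly the coalitions $T=P_l$.

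The paper resolves this by changing the additive decomposition \emph{before} invoking the basis. It writes $v=v^1+v^2$ with $v^1=\sum_{l}v^{P_l}$, where $v^{P_l}(S)=v(P_l)$ if $P_l\subseteq S$ and $0$ otherwise. Each $v^{P_l}$ is a sum of many basis games (all $e_T$ with $T\supseteq P_l$), and that is what makes it tractable: its quotient game is additive, so \emph{every} union is dummifying, DUNPP zeroes out the players outside $P_l$, and SWU plus efficiency forces the payoff $v(P_l)/p_l$ inside $P_l$. The remainder $v^2=v-v^1$ satisfies $v^2(P_u)=0$ for every union $P_u$, so when $v^2$ is expanded in the basis $\{e_T\}$ the problematic coalitions $T=P_u$ appear with coefficient zero, and every remaining basis game can be handled essentially as you propose: for $T=\cup_{l\in L}P_l$ with $|L|\geq 2$ the unions outside $L$ are dummifying and the unions in $L$ are indistinguishable, while for $T$ not a union of unions the quotient game vanishes identically. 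To repair your proof you would need to import this decomposition (or some equivalent device that uses ADD across several games to pin down $g$ on $e_{P_l}^{\alpha}$); within your game-by-game framework there is no way to close the gap.
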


\begin{proof} It is immediate to check that $ESD1^{U}$ satisfies ADD, SWU, SAU and
	DUNPP. To prove the unicity, consider a value $g$ for TU-games with a priori
	unions that satisfies ADD, SWU, SAU and DUNPP. Take $(N,v,P)\in \mathcal{G}^{U}$ 
	with $P=\{P_1,\dots,P_m\}$ and define the TU-game $(N,v^{1})$ given by
	
	\begin{equation*}
	v^1(S) = \sum_{P_{l}\subseteq S}v(P_{l}) = \sum_{l=1}^{m}v^{P_l}(S)
	\end{equation*}
	for all $S\subseteq N$, where $v^{P_l}(S) = v(P_l)$ if $P_{l}\subseteq S$ 
	and $v^{P_l}(S) = 0$ otherwise. 
	
	Take $P_k\in P$. Since $g$ is a value for TU-games with a priori unions, then
	\begin{equation*}
	\sum_{i\in N}g_i(N,v^{P_k},P)=v^{P_k}(N)=v(P_k).
	\end{equation*}
	All unions $P_l\in P$ are dummifying unions in $(v^{P_k},P)$ and all players 
	$i\in P_l$, with $l\neq k$, are nullifying players in $(v^{P_k})_{P_l}$. By DUNPP, 
	$g_i(N,v^{P_k},P)=0$ for all $i\notin P_k$. And since all
	players in $P_k$ are indistinguishable in $v^{P_k}$, then SWU implies that, 
	for all $i\in P_k$, $g_i(N,v^{P_k},P) = \frac{v(P_k)}{p_k}$. 
	Using the additivity of $g$, for all $i\in P_k$, 
	\begin{equation}  \label{eq200}
	g_i(N,v^1,P)=\frac{v(P_k)}{p_k}.
	\end{equation}
	
	Define now $v^2 = v - v^1$ and, for all $\alpha \in \mathbb{R}$ and all non-empty 
	$T\subseteq N$, $e_{T}^{\alpha}$ by 
	$e_{T}^{\alpha }(S)=\alpha$ if $S=T$ and $e_{T}^{\alpha}(S)=0$ if $S\neq T$. 
	It is clear that $v^2=\sum_{T\subseteq N}e^{v^2(T)}_T$. If $T=N$,
	since all players in $N$ are indistinguishable in $e^{v^2(N)}_N$ and all players 
	in $M$ are indistinguishable in $e^{v^2(N)}_N/P$, SWU and SAU imply that, 
	for all $i\in P_k$,
	\begin{equation*}
	g_i(N,e^{v^2(N)}_N,P) = \frac{{v^2(N)}}{mp_k} 
	= \frac{v(N) - \sum_{l\in M} v(P_l)}{mp_k}. 
	\end{equation*}
	If $T\subset N$, consider two cases:
	
	\begin{itemize}
		\item Take $T=\cup_{l\in L}P_l$, with $\emptyset\subset L\subset M$. 
		For all $P_u\in P$, if $T\neq P_u$ then $e_T^{v^{2}(T)}(P_u)=0$ and 
		if $T=P_u$ then $e_T^{v^{2}(T)}(P_u)=v^2(P_u)=0$. Hence, it is easy to 
		see that all the unions in $M\setminus L$ are dummifying unions in 
		$(e_T^{v^{2}(T)},P)$. Also, since all players in $N\setminus T$ are 
		nullifying players in $e^{v^2(T)}_T$, DUNPP implies that 
		$g_i(N,e^{v^2(T)}_T,P)=0$ for all $i\notin T$. Notice that since all
		unions in $L$ are indistinguishable in $e^{v^2(T)}_T$, then by SAU 
		$\sum_{i\in P_k}g_i(N,e^{v^{2}(T)}_T,P)
		=\sum_{i\in P_l}g_i(N,e^{v^{2}(T)}_T,P)$ for all $k,l\in L$; 
		notice also that since
		\begin{equation*}
		\sum_{i\in T} g_i(N,e^{v^{2}(T)}_T,P) = \sum_{i\in N} g_i(N,e^{v^{2}(T)}_T,P) 
		=  e^{v^{2}(T)}_T(N) = 0
		\end{equation*}
		then $\sum_{i\in P_k}g_i(N,e^{v^{2}(T)}_T,P)=0$ for all $k\in L$. To conclude, SWU implies that $g_i(N,e^{v^{2}(T)}_T,P)=0$ for all 
		$i\in P_k$, with $k\in L$, and therefore for all $i\in N$.
		
		\item For any other $T\subset N$ that is not in the previous case, 
		the quotient game $(M,e^{v^2(T)}_T/P)$ satisfies that 
		$(e^{v^2(T)}_T/P)(R)=0$ for all $R\subseteq M$ and, thus, all the unions in 
		$P$ are indistinguishable and dummifying unions in $(e^{v^2(T)}_T,P)$. 
		If $i\notin T$, then $i$ is a nullifying player in $e^{v^2(T)}_T$ 
		and DUNPP implies that $g_i(N,e^{v^2(T)}_T,P)=0$. 
		Analogously as in the previous case, SAU and SWU imply that 
		$g_i(N,e^{v^2(T)}_T,P)=0$ for all $i\in T$.
	\end{itemize}
	
	Now ADD implies that, for all $i\in P_k$ with $P_k\in P$,  
	\begin{equation}  \label{eq201}
	g_i(N,v^2,P)=\sum_{T\subseteq N}g_i(N,e_T^{v^2(T)},P)=\frac{{v^2(N)}}{mp_k}.
	\end{equation}
	
	Finally, from (\ref{eq200}), (\ref{eq201}), ADD and $v = v^1 + v^2$ it is clear that 
	\begin{equation*}
	g(N,v,P)=ESD1^U(N,v,P).
	\end{equation*}
\end{proof}

\begin{theo}
$ESD2^U$ is the unique value for TU-games with a priori unions that
satisfies ADD, SWU, SAU and DUPP. \label{th3}
\end{theo}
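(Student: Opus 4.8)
The plan is to follow the proof of Theorem~\ref{th4} closely, exploiting that $ESD2^U$ differs from $ESD1^U$ only in its treatment of the individual worths $v(i)$, and that this difference is precisely reflected in replacing DUNPP by DUPP. First I would verify that $ESD2^U$ satisfies the four axioms. ADD, SWU and SAU are routine, since $ESD2^U$ is linear in $v$ and treats indistinguishable players (resp.\ unions) alike. For DUPP, note that if $P_k$ is a dummifying union then evaluating the dummifying condition for $k$ in $v/P$ at the grand coalition of $M$ gives $v(N)=\sum_{l\in M}v(P_l)$, so the quotient-surplus term of $ESD2^U_i$ vanishes; if in addition $i\in P_k$ is a dummifying player in $v_{P_k}$, then taking $S=P_k\setminus\{i\}$ gives $v(P_k)=\sum_{j\in P_k}v(j)$, so the within-union surplus term vanishes and $ESD2^U_i(N,v,P)=v(i)$.

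For uniqueness, let $g$ satisfy ADD, SWU, SAU and DUPP, and split $v=v^a+v^0$, where $v^a(S)=\sum_{i\in S}v(i)$ is the additive game of individual worths and $v^0(S)=v(S)-\sum_{i\in S}v(i)$ is its zero-normalization. On $v^a$ every union is a dummifying union and every $i\in P_k$ is a dummifying player in $v^a_{P_k}$, since both conditions amount to additivity of $v^a$ (resp.\ of $v^a/P$); hence DUPP forces $g_i(N,v^a,P)=v(i)$ for all $i$.

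The key observation is that on a zero-normalized game DUPP and DUNPP coincide: when all singleton worths are zero, the dummifying-player condition $v_{P_k}(S\cup i)=\sum_{j\in S\cup i}v(j)$ reduces to the nullifying-player condition $v_{P_k}(S\cup i)=0$, and in both cases the prescribed value is $0$. I would then replay the proof of Theorem~\ref{th4} on $v^0$: decompose $v^0$ exactly as $v$ is decomposed there, into the corresponding games $v^{P_l}$ and $e_T^{\alpha}$, check that each of these building blocks is again zero-normalized (the only singleton-supported blocks $e_{\{i\}}^{\alpha}$ that arise carry $\alpha=0$, since $v^0(i)=0$), and replace each appeal to DUNPP by an appeal to DUPP with identical conclusion. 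This gives $g_i(N,v^0,P)=ESD1^U_i(N,v^0,P)$, which equals $ESD2^U_i(N,v^0,P)$ because the individual-worth terms of $ESD2^U$ vanish on $v^0$.

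Finally, additivity of $g$ together with the immediate additivity of $ESD2^U$ yields, for every $i\in P_k$, $g_i(N,v,P)=g_i(N,v^a,P)+g_i(N,v^0,P)=v(i)+ESD2^U_i(N,v^0,P)=ESD2^U_i(N,v,P)$, where one uses $v^0(P_k)=v(P_k)-\sum_{j\in P_k}v(j)$ and $v^0(N)-\sum_{l\in M}v^0(P_l)=v(N)-\sum_{l\in M}v(P_l)$. The main obstacle is the block-by-block verification that DUPP reproduces the conclusions of DUNPP on $v^0$; once this is secured, the argument of Theorem~\ref{th4} transfers essentially word for word.
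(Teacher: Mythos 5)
Your proposal is correct and follows essentially the same route as the paper: your decomposition $v=v^{a}+v^{0}$, followed by splitting $v^{0}$ as in Theorem~\ref{th4}, unpacks to exactly the paper's decomposition $v=v^{a}+v^{01}+v^{02}$, with the axioms applied to the same building blocks. The only difference is presentational: where the paper rewrites the Theorem~\ref{th4} argument with DUPP in place of DUNPP, you justify reusing it via the (correct) observation that the two properties coincide on zero-normalized games, together with the necessary check that all blocks arising from $v^{0}$ (in particular the singleton games $e_{\{i\}}^{v^{02}(\{i\})}$) are indeed zero-normalized.
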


\begin{proof} It is immediate to check that $ESD2^U$ satisfies ADD, SWU, SAU and
	DUPP. To prove the unicity, consider a value $g$ for TU-games with a priori
	unions that satisfies ADD, SWU, SAU and DUPP. Take $(N,v,P)\in \mathcal{G}^U$ 
	with $P=\{P_1,\dots,P_m\}$ and define $v^a$, $v^{01}$ and $v^{02}$ by:
	
	\begin{itemize}
		\item $v^a(S)=\sum_{i\in S}v(i)$,
		
		\item $v^{01}(S)=\sum_{P_{l}\subseteq S}v^0(P_{l})
		=\sum_{l=1}^{m}v^{0P_l}$(S),
		
		\item $v^{02}(S)=v^0(S)-\sum_{P_{l}\subseteq S}v^0(P_{l})$,
	\end{itemize}
	for all $S\subseteq N$, where $v^{0P_l}(S) = v^0(P_l)$ if $P_{l}\subseteq S$ 
	and $v^{0P_l}(S) = 0$ otherwise. 
	
	Since all unions are dummifying in $(v^a,P)$ and all players are dummifying
	in $v^a$, then DUPP implies that, for all $i\in N$,
	\begin{equation}  \label{eq199}
	g_i(N,v^{a},P)=v^a(i)=v(i).
	\end{equation}
	
	Take $P_k\in P$. Since $g$ is a value for TU-games with a priori unions, then
	\begin{equation*}
	\sum_{i\in N}g_i(N,v^{0P_k},P)=v^{0P_k}(N)=v^0(P_k).
	\end{equation*}
	All unions $P_l\in P$ are dummifying unions in $(v^{0P_k},P)$ and all players 
	$i\in P_l$, with $l\neq k$, are dummifying players in $(v^{0P_k})_{P_l}$. 
	By DUPP, $g_i(N,v^{0P_k},P)=v^{0P_k}(i)=0$ for all $i\notin P_k$. And
	since all players in $P_k$ are indistinguishable in $v^{0P_k}$, then 
	SWU implies that, for all $i\in P_k$, $g_i(N,v^{0P_k},P)=\frac{v^0(P_k)}{p_k}$. 
	Using ADD, for all $i\in P_k$, 
	\begin{equation}  \label{eq198}
	g_i(N,v^{01},P)=\frac{v^0(P_k)}{p_k}.
	\end{equation}
	
	Take now into account that $v^{02}=\sum_{T\subseteq N}e^{v^{02}(T)}_T$. If 
	$T=N$, since all players in $N$ are indistinguishable in $e^{v^{02}(N)}_N$ 
	and all players in $M$ are indistinguishable in $e^{v^{02}(N)}_N/P$, SWU 
	and SAU imply that, for all $i\in P_k$, 
	\begin{equation*}
	g_i(N,e^{v^{02}(N)}_N,P)=\frac{{v^{02}(N)}}{mp_k}. 
	\end{equation*}
	If $T\subset N$, consider two cases:
	
	\begin{itemize}
		\item Take $T=\cup_{l\in L}P_l$, with $\emptyset\subset L\subset M$.
		Since $e_T^{v^{02}(T)}(P_u)=0$ for all $P_u\in P$ and $(e_T^{v^{02}(T)}/P)(R)=0$ 
		for all $R\subseteq M$ with $R\cap(M\setminus L)\neq\emptyset$, 
		all the unions in $M\setminus L$ are dummifying unions in $(e_T^{v^{02}(T)},P)$. 
		Also, since all players in $N\setminus T$ are dummifying players in 
		$e^{v^{02}(T)}_T$, DUPP implies that $g_i(N,e^{v^{02}(T)}_T,P)=
		e^{v^{02}(T)}_T(i)=0 $ for all $i\notin T$. Notice that since all unions in 
		$L$ are indistinguishable in  $e^{v^{02}(T)}_T$, then by SAU 
		$\sum_{i\in P_k}g_i(N,e^{v^{02}(T)}_T,P)=\sum_{i\in P_l}g_i(N,e^{v^{02}(T)}_T,P)$ 
		for all $k,l\in L$, and notice that since
		\begin{equation*}
		\sum_{i\in T} g_i(N,e^{v^{02}(T)}_T,P) = 
		\sum_{i\in N} g_i(N,e^{v^{02}(T)}_T,P) = e^{v^{02}(T)}_T(N) = 0
		\end{equation*}
		then $\sum_{i\in 	P_k}g_i(N,e^{v^{02}(T)}_T,P)=0$ for all $k\in L$. Hence, SWU implies 
		that $g_i(N,e^{v^{02}(T)}_T,P)=0$ for all $i\in T$.
		
		\item For any other $T\subset N$ that is not in the previous case, the quotient game 
		$(M,e_T^{v^{02}(T)}/P)$ satisfies that $(e_T^{v^{02}(T)}/P)(R)=0$ for all 
		$R\subseteq M$ and, thus, all the unions in $P$ are indistinguishable and dummifying 
		unions in $(e_T^{v^{02}(T)},P)$. If $i\notin T$, then $i$ is a dummifying player in 
		$e^{v^{02}(T)}_T$ and DUPP implies that $g_i(N,e^{v^{02}(T)}_T,P)=
		e^{v^{02}(T)}_T(i)=0$. Analogously as in the previous case, SAU and SWU imply that 
		$g_i(N,e^{v^{02}(T)}_T,P)=0$ for all $i\in T$. 
	\end{itemize}
	
	Now ADD implies that, for all $i\in P_k$ with $P_k\in P$, 
	\begin{equation}  \label{eq197}
	g_i(N,v^{02},P)=\sum_{T\subseteq N}g_i(N,e_T^{v^{02}(T)},P)
	=\frac{{v^{02}(N)}}{mp_k}.
	\end{equation}
	
	Finally, from (\ref{eq199}), (\ref{eq198}), (\ref{eq197}), ADD and 
	$v = v^{a} + v^{01} + v^{02}$ it is clear that 
	\begin{equation*}
	g(N,v,P)=ESD2^U(N,v,P).
	\end{equation*}
\end{proof}

Now we provide a characterization of $ESD3^U$. In order to do it we
introduce a new property that is a weaker version of SAU.

\bigskip

\noindent \textbf{Weak symmetry among unions (WSAU).} A value $g$ for
TU-games with a priori unions satisfies weak symmetry among unions if, for
all $(N,v,P)\in\mathcal{G}^U$ with $v(j)=0$ for all $j\in N$, and for all $%
k,l\in M$ indistinguishable in $v/P$, it holds that $\sum_{i\in
P_k}g_i(N,v,P)=\sum_{i\in P_l}g_i(N,v,P)$.

\begin{theo}
$ESD3^U$ is the unique value for TU-games with a priori unions that
satisfies ADD, SWU, WSAU and DPP. \label{th5}
\end{theo}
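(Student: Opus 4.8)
The plan is to first verify that $ESD3^{U}$ satisfies the four axioms, which is routine; the only point worth noting is that WSAU holds almost trivially, because for any game with $v(j)=0$ for all $j\in N$ one has $\sum_{i\in P_{k}}ESD3^{U}_{i}(N,v,P)=v^{0}(N)/m$, a quantity independent of $k$, so the union sums coincide regardless of indistinguishability.

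For uniqueness, let $g$ satisfy ADD, SWU, WSAU and DPP, and fix $(N,v,P)$ with $P=\{P_{1},\dots,P_{m}\}$. The key idea is to split off the individual worths: define $v^{a}(S)=\sum_{i\in S}v(i)$ and $v^{0}=v-v^{a}$, so that $v=v^{a}+v^{0}$ and $v^{0}(\{j\})=0$ for every $j\in N$. Every player is a dummifying player in $v^{a}$, so DPP gives $g_{i}(N,v^{a},P)=v^{a}(i)=v(i)$ for all $i$. It then remains, by ADD, to show that $g_{i}(N,v^{0},P)=v^{0}(N)/(mp_{k})$ for $i\in P_{k}$; that is, that $g$ coincides with $ED^{U}$ on the game $v^{0}$.

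Since $v^{0}$ has null individual worths, I would expand $v^{0}=\sum_{T\subseteq N}e_{T}^{v^{0}(T)}$ and observe that only the terms with $|T|\ge 2$ survive, as the singleton coefficients $v^{0}(\{j\})$ vanish. For each such $e_{T}^{\alpha}$ all individual worths are zero, so WSAU is available, and moreover every player outside $T$ is a dummifying player with singleton worth $0$, so DPP forces $g_{i}(N,e_{T}^{\alpha},P)=0$ for $i\notin T$. When $T=N$ all players are indistinguishable in $e_{N}^{\alpha}$ and all unions in $e_{N}^{\alpha}/P$, so SWU and WSAU yield $g_{i}=\alpha/(mp_{k})$. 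When $T\subset N$ the target value is $0$, and this is the part requiring care: following the pattern of Theorems~\ref{th4} and~\ref{th3}, I would distinguish whether $T=\cup_{l\in L}P_{l}$ is a union of whole unions or not. In the first case the unions in $L$ are indistinguishable in $e_{T}^{\alpha}/P$, so WSAU equates their sums, the outside unions already contribute $0$ by DPP, and the total is $e_{T}^{\alpha}(N)=0$; hence each union sum over $L$ is $0$, and SWU (all members of each $P_{k}\subseteq T$ are mutually indistinguishable in $e_{T}^{\alpha}$) makes each coordinate $0$. In the second case the quotient game is identically $0$, so all unions are indistinguishable and WSAU makes every union sum equal, hence $0$; DPP kills the players outside $T$, and for the players of $T\cap P_{k}$, which are mutually indistinguishable in $e_{T}^{\alpha}$, SWU together with the vanishing union sum forces each to be $0$.

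Finally, ADD assembles these contributions: $g_{i}(N,v^{0},P)=\sum_{T}g_{i}(N,e_{T}^{v^{0}(T)},P)=v^{0}(N)/(mp_{k})$, and then $g_{i}(N,v,P)=g_{i}(N,v^{a},P)+g_{i}(N,v^{0},P)=v(i)+v^{0}(N)/(mp_{k})=ESD3^{U}_{i}(N,v,P)$. The main obstacle is the analysis for $T\subset N$: one must check carefully that in each subcase the hypotheses of WSAU (null individual worths, together with the correct indistinguishable unions) and of SWU (indistinguishable players inside the relevant unions) genuinely hold. This is precisely where replacing SAU by the weaker WSAU is delicate, and where the restriction to $|T|\ge 2$ arising from $v^{0}$ does the essential work, since it guarantees that the basis games to which WSAU must be applied all lie in the zero-singleton subclass.
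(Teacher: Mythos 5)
Your proof is correct and follows essentially the same route as the paper: decompose $v=v^{a}+v^{0}$ with $v^{a}(S)=\sum_{i\in S}v(i)$, apply DPP to $v^{a}$, and then run the equal-division argument of Theorem~\ref{th2} on $v^{0}$. The paper compresses that last step into ``analogous arguments as in the proof of Theorem~\ref{th2}''; you have simply made explicit the point it leaves implicit, namely that the basis games $e_{T}^{v^{0}(T)}$ all have zero singleton worths, so DPP substitutes for NPP and the weaker WSAU substitutes for SAU.
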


\begin{proof} It is immediate to check that $ESD3^U$ satisfies ADD, SWU, WSAU and DPP. To
prove the unicity, consider a value $g$ for TU-games with a priori unions
that satisfies ADD, SWU, WSAU and DPP. Take now $(N,v,P)\in\mathcal{G}^U$
and $i\in P_k$ with $P_k\in P$, and define $v^{a} = v - v^0$. ADD implies that 
\begin{equation}  \label{eq899}
g_i(N,v,P)=g_i(N,v^{a},P)+g_i(N,v^0,P).
\end{equation}

Since all players are dummifying in $v^a$, then DPP implies that 
\begin{equation}  \label{eq799}
g_i(N,v^{a},P)=v^a(i)=v(i).
\end{equation}

Now, using for $(N,v^0)$ analogous arguments as those used in the proof of
Theorem \ref{th2}, it is clear that ADD, SWU, WSAU and DPP imply that
\begin{equation}  \label{eq898}
g_i(N,v^0,P)=ED_i(N,v^0,P).
\end{equation}

Finally, from (\ref{eq899}), (\ref{eq799}) and (\ref{eq898}) it is clear that 
\begin{equation*}
g(N,v,P)=ESD3^U(N,v,P).
\end{equation*}
\end{proof}

\section{Concluding remarks}

In this last section, we include some supplementary information.

\medskip a) It is immediate to prove that $ESD3^{U}$ does not satisfy SAU. Since WSAU is 
a weaker version of SAU, and $ESD3^{U}$ is characterized with ADD, SWU, WSAU
and DPP, we conclude that there does not exist a value for TU-games with a priori unions satisfying ADD, SWU,
SAU and DPP.

b) Given a value $f$ for TU-games, a \emph{coalitional }$f$ 
\emph{value} is a value $g$ for TU-games with a priori unions that
coincides with $f$ when the partition $P$ is such that each union is a
singleton. That is, if we denote by $P^{n}$ the partition $\left\{ \left\{
1\right\} ,\left\{ 2\right\} ,\ldots ,\left\{ n\right\} \right\} ,$ it holds
that $g(N,v,P^{n})=f(N,v)$. It is easy to check that $ED^{U}$ is a \emph{%
coalitional equal division value}, and $ESD1^{U}$, $ESD2^{U}$ and $ESD3^{U}$
are \emph{coalitional equal surplus division values}.

c) A value $g$ for TU-games with a priori unions satisfies 
the \emph{quotient game property} (QGP) if, for all $( N,v,P)\in\mathcal{G}^U$ 
with $P=\{P_1,\dots,P_m\}$ and for its quotient game $(M,v/P)$, it holds that 
$\sum_{i\in P_k}g_{i}\left( N,v,P\right) =g_{k}\left( M,v/P,P^{m}\right)$ 
for all $P_k\in P$. 
It is easy to check that $ED^U$, $ESD1^U$ and $ESD2^U$ satisfy QGP. 
However, $ESD3^U$ does not satisfy QGP. Maybe that is the reason why it 
does not behave in an appropriate way in the example we dealt with in 
Section \ref{sec_esdv}.

d) The properties in the theorems of this paper are independent. We prove it in the Appendix.

%h) We denote by $v^{i}$\ the TU-game in which player $i$ is nullified, that
%is, $v^{i}\left( S\right) =v(S\backslash i)=0$ for all $S\subseteq N$. Ferri%
%\'{e}res (2017) introduces the \emph{Nullified equal loss property}: a value 
%$f$ satisfies this property if given three players $i,j$ and $h$, the
%difference between the payoffs assigned by $f$ to $i$ and $j$ in the
%original game $v$ and in the game $v^{h}$ is the same. Ferri\'{e}res
%characterizes the equal surplus division value with this property and \emph{%
%Inessential game} (for all additive TU-game $v$, $f^{i}\left( N,v\right)
%=v\left( i\right) $). Chun and Park (2012) characterize the equal surplus
%division value with the properties of \emph{Population solidarity}: no
%players should gain with the arrival of new players, and \emph{Strategic
%equivalence}: if a constant is added to the worths of all coalitions
%containing a particular player, the payoffs of other players are not
%affected. One of the ideas that we want to develop in the near future is to
%find characterizations for $ED^{U},ESD1^{U},ESD2^{U}$ and $ESD3^{U}$ with
%similar properties to those used in these previous results.

\section*{Acknowledgements}

\noindent This work has been supported by the ERDF, the MINECO/AEI grants
MTM2017-87197-C3-1-P, MTM2017-87197-C3-3-P, and by the Xunta de Galicia
(Grupos de Referencia Competitiva ED431C-2016-015 and ED431C-2017/38 and
Centro Singular de Investigaci\'{o}n de Galicia ED431G/01).

\section*{References}

\noindent Alonso-Meijide JM, Casas-M\'{e}ndez B, Fiestras-Janeiro G, Holler
MJ (2011). The Deegan-Packel index for simple games with a priori unions.
Quality \& Quantity 45, 425-439.\newline
\noindent Alonso-Meijide JM, Costa J, Garc\'{\i}a-Jurado I (2019). Null,
Nullifying, and Necessary Agents: Parallel Characterizations of the Banzhaf
and Shapley Values. Journal of Optimization Theory and Applications 180,
1027-1035.\newline
\noindent Alonso-Meijide JM, Fiestras-Janeiro G (2002). Modification of the
Banzhaf value for games with a coalition structure. Annals of Operations
Research 109, 213-227.\newline
\noindent Ar\'{\i}n J, Kuipers J, Vermeulen D (2003). Some characterizations
of egalitarian solutions on classes of TU-games. Mathematical Social
Sciences 46, 327-345.\newline
\noindent Aumann RJ, Dr\`{e}ze J (1974). Cooperative games with coalition
structures. International Journal of Game Theory 3, 217-237.\newline
\noindent B\'{e}al S, R\'{e}mila E, Solal P (2019). Coalitional desirability
and the equal division value. Theory and Decision 86, 95-106.\newline
\noindent Casajus A, H\"{u}ttner F (2014). Null, nullifying, or dummifying
players: The difference between the Shapley value, the equal division value,
and the equal surplus division value. Economics Letters 122, 167-169.\newline
\noindent Casas-M\'{e}ndez B, Garc\'{\i}a-Jurado I, van den Nouweland A, V%
\'{a}zquez-Brage M (2003). An extension of the $\tau $-value to games with
coalition structures. European Journal of Operational Research 148, 494-513.%
\newline
\noindent Chun Y, Park B (2012). Population solidarity, population
fair-ranking and the egalitarian value. International Journal of Game Theory
41, 255-270.\newline
\noindent Costa J (2016). A polynomial expression of the Owen value in the
maintenance cost game. Optimization 65, 797-809.\newline
\noindent Crettez B, Deloche R (2018). A law-and-economics perspective on
cost-sharing rules for a condo elevator. To appear in Review of Law \&
Economics. doi: 10.1515/rle-2016-0001.\newline
\noindent Dietzenbacher B, Borm P, Hendrickx R (2017). The procedural
egalitarian solution. Games and Economic Behavior 106, 179-187.\newline
\noindent Driessen TSH, Funaki Y (1991). Coincidence of and collinearity 
between game theoretic solutions. OR Spectrum 13, 15-30.\newline
\noindent Dutta B (1990). The egalitarian solution and reduced game
properties in convex games. International Journal of Game Theory 19, 153-169.%
\newline
\noindent Dutta B, Ray D (1989). A concept of egalitarianism under
participation constraints. Econometrica 57, 615-635.\newline
\noindent Ferri\`{e}res S (2017). Nullified equal loss property and equal
division values. Theory and Decision 83, 385-406.\newline
\noindent Hu XF, Xu GJ, Li DF (2019). The egalitarian efficient extension of
the Aumann-Dr\`{e}ze value. Journal of Optimization Theory and Applications
181, 1033-1052.\newline
\noindent Ju Y, Borm P, Ruys P (2007). The consensus value: A new solution
concept for cooperative games. Social Choice and Welfare 28, 685-703.\newline
\noindent Klijn F, Slikker M, Tijs S, Zarzuelo J (2000). The egalitarian
solution for convex games: some characterizations. Mathematical Social
Sciences 40, 111-121.\newline
\noindent Lorenzo-Freire S (2016). On new characterizations of the Owen
value. Operations Research Letters 44, 491-494.\newline
\noindent Moretti S, Patrone F (2008). Transversality of the Shapley value.
Top 16, 1-41.\newline 
\noindent Owen G (1977) Values of games with a priori unions. In: Mathematical
Economics and Game Theory (R Henn, O Moeschlin, eds.), Springer, 76-88.\newline
\noindent Saavedra-Nieves A, Garc\'{\i}a-Jurado I, Fiestras-Janeiro G
(2018). Estimation of the Owen value based on sampling. In: The Mathematics
of the Uncertain: A Tribute to Pedro Gil (E Gil, E Gil, J Gil, MA Gil,
eds.), Springer, 347-356.\newline
\noindent Selten R (1972). Equal share analysis of characteristic function
experiments. In: Contributions to Experimental Economics III. (Sauermann H,
ed.), Mohr Siebeck, 130-165.\newline
\noindent Shapley LS (1953). A value for n-person games. In: Contributions
to the Theory of Games II (HW Kuhn, AW Tucker, eds.), Princeton University
Press, 307-317.\newline
\noindent van den Brink R (2007). Null or nullifying players: the difference
between the Shapley value and equal division solutions. Journal of Economic
Theory 136, 767-775.\newline
\noindent van den Brink R, Funaki Y (2009) Axiomatizations of a class of
equal surplus sharing solutions for TU-games. Theory and Decision 67,
303-340.\newline
\noindent van den Brink R, Chun Y, Funaki Y, Park B (2016). Consistency, population
solidarity, and egalitarian solutions for TU-games. Theory and Decision 81,
427-447.\newline

\section*{Appendix}

\noindent a) Independence of the properties of Theorem \ref{th2}:

\begin{itemize}
	
		\item $\varphi _{i}=v(i)$ satisfies ADD,
	SWU, SAU and NPP, but not EFF.
	
	\item $\varphi _{i}=$ $\frac{v(P_{k})}{p_{k}}+\frac{v(N)-\sum_{l\in
			M}v(P_{l})}{mp_{k}}$ satisfies EFF, ADD, SWU and SAU, but not NPP.
	
	\item $\varphi _{i}=$ $\frac{v(N)}{n}$ satisfies EFF, ADD, SWU and NPP, but not
	SAU.
	
	\item $\varphi _{i}=\frac{2v(N)}{mp_{k}}$ if $\displaystyle i=\min_{j\in
		P_{k}}j$ or $\varphi _{i}=\frac{(p_{k}-2)v(N)}{mp_{k}(p_{k}-1)}$ if $i\in
	P_{k}$ and $i\neq \min_{j\in P_{k}}j$, satisfies EFF, ADD, SAU and NPP, but not
	SWU.
	
	\item $\varphi _{i}=\frac{2v(N)}{mp_{k}|Z_{k}|}$ if $\displaystyle i\in
	Z_{k}=\{j\in P_{k}/v(j)=\min_{z\in P_{k}}v(z)\}$, $\varphi _{i}=\frac{%
		(p_{k}-2)v(N)}{mp_{k}(p_{k}-|Z_{k}|)}$ if $i\in P_{k}\backslash Z_{k}$,
	satisfies EFF, SWU, SAU and NPP, but not ADD.
\end{itemize}

\noindent b) Independence of the properties of Theorem \ref{th4}:

\begin{itemize}
	
		\item $\varphi _{i}=v(i)$ satisfies ADD,
	SWU, SAU and DUNPP, but not EFF.
	
	\item $\varphi _{i}=\frac{v(P_{k})}{p_{k}}+\frac{v(N)-\sum_{l\in M}v(P_{l})}{n}$ satisfies EFF, ADD, SWU, DUNPP, but not SAU.
	
	\item $\varphi _{i}=\frac{v(P_{k})}{p_{k}}+\frac{2(v(N)-\sum_{l\in
			M}v(P_{l}))}{mp_{k}}$ if $\displaystyle i=\min_{j\in P_{k}}j$ or $\varphi
	_{i}=\frac{v(P_{k})}{p_{k}}+\frac{(p_{k}-2)(v(N)-\sum_{l\in
			M}v(P_{l}))}{%
		mp_{k}(p_{k}-1)}$ if $i\in P_{k}$ and $i\neq \min_{j\in P_{k}}j$, satisfies
	EFF, ADD, SAU and DUNPP, but not SWU.
	
	\item $\varphi _{i}=\frac{v(P_{k})}{p_{k}}+\frac{2(v(N)-\sum_{l\in
			M}v(P_{l}))}{mp_{k}|Z_{k}|}$ if $\displaystyle i\in Z_{k}=\{j\in
	P_{k}/v(j)=\min_{z\in P_{k}}v(z)\}$, $\varphi _{i}=\frac{v(P_{k})}{p_{k}}+%
	\frac{(p_{k}-2)(v(N)-\sum_{l\in
			M}v(P_{l}))}{mp_{k}(p_{k}-|Z_{k}|)}$ if $%
	i\in P_{k}\backslash Z_{k}$, satisfies EFF, SWU, SAU and DUNPP, but not ADD.
	
	\item $\varphi _{i}=v(i)+\frac{v(P_{k})-\sum_{j\in P_{k}}v(j)}{p_{k}}+\frac{%
		v(N)-\sum_{l\in
			M}v(P_{l})}{mp_{k}}$ satisfies EFF, ADD, SWU and SAU, but not
	DUNPP.
\end{itemize}

\noindent c) Independence of the properties of Theorem \ref{th3}:

\begin{itemize}
	
		\item $\varphi _{i}=v(i)$ satisfies ADD,
	SWU, SAU and DUPP, but not EFF.
	
	\item $\varphi _{i}=v(i)+\frac{v(P_{k})-\sum_{j\in P_{k}}v(j)}{p_{k}}+\frac{%
		v(N)-\sum_{l\in
			M}v(P_{l})}{n}$ satisfies EFF, ADD, SWU and DUPP, but not SAU.
	
	\item $\varphi _{i}=v(i)+\frac{v(P_{k})-\sum_{j\in P_{k}}v(j)}{p_{k}}+\frac{%
		2(v(N)-\sum_{l\in
			M}v(P_{l}))}{mp_{k}}$ if $\displaystyle i=\min_{j\in
		P_{k}}j$ or $\varphi _{i}=v(i)+\frac{v(P_{k})-\sum_{j\in P_{k}}v(j)}{p_{k}}+%
	\frac{(p_{k}-2)(v(N)-\sum_{l\in
			M}v(P_{l}))}{mp_{k}(p_{k}-1)}$ if $i\in
	P_{k}$ and $i\neq \min_{j\in P_{k}}j$, satisfies EFF, ADD, SAU and DUPP, but not
	SWU.
	
	\item $\varphi _{i}=v(i)+\frac{v(P_{k})-\sum_{j\in P_{k}}v(j)}{p_{k}}+\frac{%
		2(v(N)-\sum_{l\in
			M}v(P_{l}))}{mp_{k}|Z_{k}|}$ if $\displaystyle i\in
	Z_{k}=\{j\in P_{k}/v(j)=\min_{z\in P_{k}}v(z)\}$, $\varphi _{i}=v(i)+\frac{%
		v(P_{k})-\sum_{j\in P_{k}}v(j)}{p_{k}}+\frac{(p_{k}-2)(v(N)-\sum_{l\in
			M}v(P_{l}))}{mp_{k}(p_{k}-|Z_{k}|)}$ if $i\in P_{k}\backslash Z_{k}$,
	satisfies EFF, SWU, SAU and DUNPP, but not ADD.
	
	\item $\varphi _{i}=\frac{v(P_{k})}{p_{k}}+\frac{v(N)-\sum_{l\in
			M}v(P_{l})}{mp_{k}}$ satisfies EFF, ADD, SWU and SAU, but not DUPP.
\end{itemize}

\noindent d) Independence of the properties of Theorem \ref{th5}:

\begin{itemize}
	
		\item $\varphi _{i}=v(i)$ satisfies ADD,
	SWU, WSAU and DPP, but not EFF.
	
	\item $\varphi _{i}=v(i)+\frac{v(N)-\sum_{j\in N}v(j)}{n}$ satisfies EFF, ADD,
	SWU and DPP, but not WSAU.
	
	\item $\varphi _{i}=v(i)+\frac{2(v(N)-\sum_{j\in N}v(j))}{mp_{k}}$ if $%
	\displaystyle i=\min_{j\in P_{k}}j$ or $\varphi _{i}=v(i)+\frac{%
		(p_{k}-2)(v(N)-\sum_{j\in N}v(j))}{mp_{k}(p_{k}-1)}$ if $i\in P_{k}$ and $%
	i\neq \min_{j\in P_{k}}j$, satisfies EFF, ADD, WSAU and DPP, but not SWU.
	
	\item $\varphi _{i}=v(i)+\frac{2(v(N)-\sum_{j\in N}v(j))}{mp_{k}|Z_{k}|}$ if $%
	\displaystyle i\in Z_{k}=\{j\in P_{k}/v(j)=\min_{z\in P_{k}}v(z)\}$, $%
	\varphi _{i}=v(i)+\frac{(p_{k}-2)(v(N)-\sum_{j\in N}v(j))}{%
		mp_{k}(p_{k}-|Z_{k}|)}$ if $i\in P_{k}\backslash Z_{k}$, satisfies EFF, SWU, WSAU
	and DPP, but not ADD.
	
	\item $\varphi _{i}=\frac{v(P_{k})}{p_{k}}+\frac{v(N)-\sum_{l\in
			M}v(P_{l})}{mp_{k}}$ satisfies EFF, ADD, SWU, WSAU but not DPP.
\end{itemize}

\end{document}